\newcommand{\tabincell}[2]{\begin{tabular}{@{}#1@{}}#2\end{tabular}}
\newtheorem{thm}{Theorem}
\newtheorem{lemma}{Lemma}
\begin{document}

\begin{center}
\LARGE\bf Detection of Genuine Multipartite Entanglement in Multipartite Systems 
\end{center}


\begin{center}
\rm  Jing Yun Zhao,$^1$ \  Hui Zhao,$^1$ \ Naihuan Jing,$^{2,3}$ \ and Shao-Ming Fei$^{4,5}$ 
\end{center}

\begin{center}
\begin{footnotesize} \sl
$^1$ College of Applied Sciences, Beijing University of Technology, Beijing 100124, China

$^2$ Department of Mathematics, North Carolina State University, Raleigh, NC 27695, USA

$^3$ Department of Mathematics, Shanghai University, Shanghai 200444, China

$^4$ School of Mathematical Sciences, Capital Normal University, Beijing 100048, China

$^5$ Max-Planck-Institute for Mathematics in the Sciences, 04103 Leipzig, Germany

\end{footnotesize}
\end{center}

\vspace*{2mm}

\begin{center}
\begin{minipage}{15.5cm}
\parindent 20pt\footnotesize

We investigate genuine multipartite entanglement in general multipartite systems.
Based on the norms of the correlation tensors of a multipartite state under various partitions, we present an analytical sufficient criterion for detecting the
genuine four-partite entanglement. The results are generalized to arbitrary multipartite systems.
\end{minipage}
\end{center}

\begin{center}
\begin{minipage}{15.5cm}
\begin{minipage}[t]{2.3cm}{\bf Keywords:}\end{minipage}
\begin{minipage}[t]{13.1cm}
genuine multipartite entanglement, correlation tensor
\end{minipage}\par\vglue8pt
{\bf PACS: } 03.65.Ud, 03.67.Mn
\end{minipage}
\end{center}

\section{Introduction}
Quantum entanglement is one of the most fascinating features in quantum physics, with numerous applications in quantum information processing,
secure communication and channel protocols [1,2,3]. In particular, the genuine multipartite entanglement appears to have
more significant advantages than the bipartite ones in these quantum tasks [4].

The notion of genuine multipartite entanglement (GME) was introduced in [5].
Let $H_i^d$, $i=1,2,...,n$, denote $d$-dimensional Hilbert spaces. An $n$-partite state $\rho  \in H_1^d \otimes ... \otimes H_n^d$ can be expressed as
$
\rho = \sum {{p_\alpha }} \left| {{\psi _\alpha }} \right\rangle \left\langle {{\psi_\alpha }} \right|,
$
where $0<p_\alpha\leq 1$, $\sum {{p_\alpha }}  = 1$, $\left| {{\psi _\alpha }} \right\rangle \in H_1^d \otimes ...\otimes H_n^d$ are normalized pure states.
$\rho$ is said to be fully separable if it can be written as
$\rho=\sum_i q_i\ \rho^1_i\otimes\rho^2_i\otimes\cdots\otimes\rho^n_i$,
where $q_i$ is a probability distribution and
$\rho^j_i$ are density matrices with respect to the subsystem $H_j$.
On the other hand, $\rho$ is called genuine $n$-partite entangled if $\left| {{\psi _\alpha }} \right\rangle$ are not separable under any bipartite partitions.

The genuine multipartite entangled states exist in physical systems like the ground state of the XY model [6]. However, it is extremely difficult to identify the GME for general mixed multipartite states.
The GME concurrence and its lower bound were studied in [7-9]. Some sufficient or necessary conditions of GME were presented in [10-12].
As for detection of GME, the common criterion is the entanglement witnesses [13-16]. Using correlation tensors, the authors in [17]
have provided a general framework to detect different classes of GME for quantum systems of arbitrary dimensions.
In [18] the genuine multipartite entanglement has been investigated in terms of the norms of the correlation tensors and multipartite concurrence.
The relations between the norms of the correlation tensors and the detection of GME in tripartite quantum systems have been established in [19].

We need to use some simple mathematical concepts in this paper, let's briefly review them here. The elements of a vector space are called vectors. As we known, tensor product is a way of putting vector spaces together to form larger vector spaces. Suppose $W$ and $V$ are Hilbert spaces of dimension $m$ and $n$ respectively. Then $W\otimes V$ is an $mn$ dimensional vector space. The elements of $W\otimes V$ are liner combinations of `tensor products' $u\otimes v$ of elements $u$ of $W$ and $v$ of $V$. The outer product of $u$ and $v$ is equivalent to a matrix multiplication $uv^t$, provided that $u$ is represented as a $m\times 1$ column vector and $v$ as a $n\times 1$ column vector (which makes $v^t$ a row vector).

In this paper, we analyze the relationship between the norms of the correlation tensors and various bipartitions of multipartite quantum systems, and present sufficient conditions of GME for four partite and multipartite quantum systems.

We generalize some inequalities of the norms of the correlation tensors for four-partite states and give a criterion to detect GME of four-partite quantum systems in Section 2. In Section 3, we generalize these concepts and conclusions to multipartite quantum systems. Comments and conclusions are given in Section 4.

\section{Detection of GME for Four-partite Quantum States}
We first consider the GME for four-partite qudit states $\rho  \in H_1^d \otimes ... \otimes H_4^d$.
Let $\lambda_i$, $i=1,\cdots,d^2-1$, denote the mutually orthogonal generators of the special unitary Lie algebra $\mathfrak{su}(d)$ under
a fixed bilinear form [20], and $I$ the $d\times d$ identity matrix. Then $\rho$ can be expanded in terms of $\lambda_i$s,
\begin{eqnarray}
\rho&=&\frac{1}{d^4}I\otimes I\otimes I \otimes I+\frac{1}{2d^3}\sum^{4}_{f=1}\sum^{d^2-1}_{i_1=1} t_{i_1}^{(f)} \lambda_{i_1}^{(f)}\otimes I\otimes I \otimes I+\cdots\nonumber\\&&+\frac{1}{16}\sum^{d^2-1}_{i_1,i_2,i_3,i_4=1} t_{i_1,i_2,i_3,i_4}^{(1,2,3,4)} \lambda_{i_1}^{(1)}\otimes\lambda_{i_2}^{(2)} \otimes\lambda_{i_3}^{(3)} \otimes\lambda_{i_4}^{(4)},
\label{C}
\end{eqnarray}
where $\lambda_{i_1}^{(f)}$($(f)$ represents the position of $\lambda_{i_1}$ in the tensor product) stand for the operators with $\lambda_{i_1}$ on $H_{f}$ and $I$ on the rest spaces,
$t_{i_1}^{(f)}=tr(\rho\,\lambda_{i_1}^{(f)}\otimes I\otimes I \otimes I)$, $\cdots, t_{i_1,i_2,i_3,i_4}^{(1,2,3,4)}=tr(\rho\,\lambda_{i_1}^{(1)}\otimes \lambda_{i_2}^{(2)}\otimes \lambda_{i_3}^{(3)} \otimes \lambda_{i_4}^{(4)} )$.

Let $T^{(f)},\cdots,T^{(1,2,3,4)}$ denote  vectors with entries $t_{i_1}^{(f)},\cdots, t_{i_1,i_2,i_3,i_4}^{(1,2,3,4)}$$(i_1,i_2,i_3, i_4$ $=1,\cdots, d^2-1;f=1,2,3,4)$, respectively. From $T^{(f)},\cdots,T^{(1,2,3,4)}$ we further define the following matrices under different partitions.

We denote $T_{f|ghl}$ the ${(d^2-1)\times(d^2-1)^3}$ matrices with entries $t_{i_f,(d^2-1)^2(i_g-1)+(d^2-1)(i_h-1)+i_l}=t_{i_1,i_2,i_3,i_4}^{(1,2,3,4)}$, $T_{fg|hl}$ the ${(d^2-1)^2\times(d^2-1)^2}$ matrices with entries $t_{(d^2-1)(i_f-1)+i_g,(d^2-1)(i_h-1)+i_l}=t_{i_1,i_2,i_3,i_4}^{(1,2,3,4)}$, $T_{fgh|l}$ the ${(d^2-1)^3\times(d^2-1)}$ matrices with entries $t_{(d^2-1)^2(i_f-1)+(d^2-1)(i_g-1)+i_h,i_l}=t_{i_1,i_2,i_3,i_4}^{(1,2,3,4)}$, where $f\neq g\neq h\neq l=1,2,3,4; i_f,i_g,i_h,i_l=1,\cdots,d^2-1$. If the state is fully separable, we denote $T_{1|2|3|4}$ the ${(d^2-1)\times(d^2-1)^3}$ matrices with entries $t_{i_1,(d^2-1)^2(i_2-1)+(d^2-1)(i_3-1)+i_4}=t_{i_1,i_2,i_3,i_4}^{(1,2,3,4)}$.

Let $T^{(\underline{f},g)}$ and $T^{(f,\underline{g})}$ be ${(d^2-1)\times(d^2-1)}$ matrices with entries $t_{i_1,i_2}=t_{i_1,i_2}^{(f,g)}$ and $t_{i_2,i_1}=t_{i_1,i_2}^{(f,g)}$, respectively. We denote  $T^{(\underline{f},g,h)}$, $T^{(f,\underline{g},h)}$ and $T^{(f,g,\underline{h})}$  the ${(d^2-1)\times(d^2-1)^2}$ matrices with entries given by $t_{i_1,(d^2-1)(i_2-1)+i_3}={t_{i_1,i_2,i_3}^{(f,g,h)}}$, $t_{i_2,(d^2-1)(i_1-1)+i_3}={t_{i_1,i_2,i_3}^{(f,g,h)}}$ and $t_{i_3,(d^2-1)(i_1-1)+i_2}={t_{i_1,i_2,i_3}^{(f,g,h)}}$, respectively. We denote $T^{(\underline{f,g},h)}$, $T^{(\underline{f},g,\underline{h})}$ and $T^{(f,\underline{g,h})}$ the ${(d^2-1)^2\times(d^2-1)}$ matrices with entries given by $t_{(d^2-1)(i_1-1)+i_2,i_3}=t_{i_1,i_2,i_3}^{(f,g,h)}$, $t_{(d^2-1)(i_1-1)+i_3,i_2}=t_{i_1,i_2,i_3}^{(f,g,h)}$ and $t_{(d^2-1)(i_2-1)+i_3,i_1}=t_{i_1,i_2,i_3}^{(f,g,h)}$, respectively.

The Frobenius norm is matrix norm of an $m\times n$ matrix $M$ defined as the square root of the sum of the absolute squares of its elements, $\parallel M\parallel$=$\sqrt{\sum_{i,j}|M_{ij}|^2}$. It is also equal to the square root of the matrix trace of $MM^\dag$, where $M^\dag$ is the conjugate transpose, i.e., $\parallel M\parallel=\sqrt{tr(MM^\dag)}$. Since trace is invariant under unitary equivalence, this shows $\parallel M\parallel=\sqrt{\sum_i\sigma_i^2}$. The sum of the $k$ largest singular values of $M$ is a matrix norm, the Ky Fan $k$-norm of $M$, i.e., $\parallel M\parallel_k=\sum_i^k\sigma_i$, where $\sigma_i$, $i=1,\cdots,min(m,n)$, are the singular values of the matrix $M$ arranged in descending order.

For any pure state $\rho\in H^d_1\otimes H^d_2 \otimes H^d_3$,
$\rho=\frac{1}{d^3}I\otimes I\otimes I +\frac{1}{2d^2}(\sum^{d^2-1}_{i_1} t_{i_1}^{(1)} \lambda_{i_1}^{(1)}\otimes I\otimes I+\sum^{d^2-1}_{i_2} t_{i_2}^{(2)} I\otimes \lambda_{i_2}^{(2)}\otimes I+\sum^{d^2-1}_{i_3} t_{i_3}^{(3)} I \otimes I\otimes \lambda_{i_3}^{(3)})+\frac{1}{4d}(\sum^{d^2-1}_{i_1,i_2} t_{i_1,i_2}^{(1,2)} \lambda_{i_1}^{(1)}\otimes\lambda_{i_2}^{(2)} \otimes I+\sum^{d^2-1}_{i_2,i_3} t_{i_2,i_3}^{(2,3)} I\otimes\lambda_{i_2}^{(2)} \otimes \lambda_{i_3}^{(3)}+\sum^{d^2-1}_{i_1,i_3} t_{i_1,i_3}^{(1,3)} \lambda_{i_1}^{(1)}\otimes I  \otimes \lambda_{i_3}^{(3)})+\frac{1}{8}\sum^{d^2-1}_{i_1,i_2,i_3} t_{i_1,i_2,i_3}^{(1,2,3)} \lambda_{i_1}^{(1)}\otimes\lambda_{i_2}^{(2)} \otimes \lambda_{i_3}^{(3)}$, we have
$tr(\rho^2)=\frac{1}{d^3}+\frac{1}{2d^2}[\sum(t_{i_1}^{(1)})^2+\sum(t_{i_2}^{(2)})^2+
\sum(t_{i_3}^{(3)})^2]+\frac{1}{4d}[\sum(t_{i_1,i_2}^{(1,2)})^2+\sum(t_{i_1,i_3}^{(1,3)})^2+
\sum(t_{i_2,i_3}^{(2,3)})^2]+\frac{1}{8}\sum(t_{i_1,i_2,i_3}^{(1,2,3)})^2=1$. Therefore
\begin{eqnarray}\nonumber
\sum(t_{i_1,i_2,i_3}^{(1,2,3)})^2&&=\frac{8(d^3-1)}{d^3}-\Big\{\frac{4}{d^2}\left[\sum(t_{i_1}^{(1)})^2+\sum(t_{i_2}^{(2)})^2+
\sum(t_{i_3}^{(3)})^2\right]+\nonumber\\&&~~\frac{2}{d}\left[\sum(t_{i_1,i_2}^{(1,2)})^2+ \sum(t_{i_1,i_3}^{(1,3)})^2+
\sum(t_{i_2,i_3}^{(2,3)})^2\right]\Big\}\nonumber\\&&\leq\frac{8(d^3-1)}{d^3}.\nonumber
\end{eqnarray}
Thus, $\parallel T^{(1,2,3)}\parallel=\sqrt{\sum(t_{i_1,i_2,i_3}^{(1,2,3)})^2}\leq\frac{2}{d}\sqrt{\frac{2(d^3-1)}{d}}$.
Concerning the relations between the correlation tensors and the separability under various partitions, we have the following results:

\begin{lemma}
Let $\rho\in H^d_1\otimes H^d_2 \otimes H^d_3 \otimes H^d_4$ be a pure state. If $\rho$ is fully separable, then for any $k=1,\cdots,d^2-1$,
\begin{eqnarray}
 \parallel T_{1|2|3|4}\parallel_k= {\frac{4(d-1)^2}{d^2}}.
\end{eqnarray}
\end{lemma}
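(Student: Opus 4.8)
The plan is to exploit the fact that a \emph{pure} fully separable state is nothing but a tensor product of pure local states, which forces the entire correlation tensor — and in particular the reshaped matrix $T_{1|2|3|4}$ — to be a rank-one object; all its Ky Fan $k$-norms then collapse to the unique nonzero singular value, giving a value independent of $k$.

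First I would observe that, since $\rho=|\psi\rangle\langle\psi|$ is extremal in the set of density matrices, full separability forces $\rho=\rho^{(1)}\otimes\rho^{(2)}\otimes\rho^{(3)}\otimes\rho^{(4)}$ with each $\rho^{(j)}=|\phi_j\rangle\langle\phi_j|$ a pure state on $H_j^d$. Expanding each factor in the Bloch form $\rho^{(j)}=\frac{1}{d}I+\frac{1}{2}\sum_{i}t_i^{(j)}\lambda_i$ and writing $\mathbf{t}^{(j)}=(t_1^{(j)},\ldots,t_{d^2-1}^{(j)})$, I would use $\mathrm{tr}(\lambda_i\lambda_j)=2\delta_{ij}$ together with purity $\mathrm{tr}\big((\rho^{(j)})^2\big)=1$ — exactly the computation already carried out for the single-party terms in the tripartite case above — to get $\|\mathbf{t}^{(j)}\|^2=\sum_i (t_i^{(j)})^2=\frac{2(d-1)}{d}$ for each $j$.

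Next, because the state factorizes, every full correlation coefficient factorizes: $t^{(1,2,3,4)}_{i_1 i_2 i_3 i_4}=\mathrm{tr}(\rho^{(1)}\lambda_{i_1})\cdots\mathrm{tr}(\rho^{(4)}\lambda_{i_4})=t^{(1)}_{i_1}t^{(2)}_{i_2}t^{(3)}_{i_3}t^{(4)}_{i_4}$, so the correlation tensor is the pure tensor $\mathbf{t}^{(1)}\otimes\mathbf{t}^{(2)}\otimes\mathbf{t}^{(3)}\otimes\mathbf{t}^{(4)}$. Under the reshaping prescribed in the definition of $T_{1|2|3|4}$ (row index $i_1$, column index the combined index of $(i_2,i_3,i_4)$), this is precisely the outer product $T_{1|2|3|4}=\mathbf{t}^{(1)}\big(\mathbf{t}^{(2)}\otimes\mathbf{t}^{(3)}\otimes\mathbf{t}^{(4)}\big)^t$, a $(d^2-1)\times(d^2-1)^3$ matrix of rank one. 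A real rank-one matrix $\mathbf{u}\mathbf{v}^t$ has the single nonzero singular value $\|\mathbf{u}\|\,\|\mathbf{v}\|$ (all other singular values vanish), so for every $k=1,\ldots,d^2-1$ we get $\|T_{1|2|3|4}\|_k=\sigma_1=\|\mathbf{t}^{(1)}\|\,\big\|\mathbf{t}^{(2)}\otimes\mathbf{t}^{(3)}\otimes\mathbf{t}^{(4)}\big\|=\prod_{j=1}^{4}\|\mathbf{t}^{(j)}\|$ by multiplicativity of the Euclidean norm under tensor products. Substituting $\|\mathbf{t}^{(j)}\|^2=\frac{2(d-1)}{d}$ yields $\|T_{1|2|3|4}\|_k=\big(\tfrac{2(d-1)}{d}\big)^2=\frac{4(d-1)^2}{d^2}$, as claimed.

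There is no deep obstacle here; the two things to watch are the normalization convention for the $\mathfrak{su}(d)$ generators (which pins down the per-party constant $\frac{2(d-1)}{d}$) and the index bookkeeping in the definition of $T_{1|2|3|4}$, so that the claimed outer-product form is literally the stated reshaping. Once the rank-one structure is recognized, both the independence of $k$ and the closed-form value are immediate.
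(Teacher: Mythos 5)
Your proposal is correct and follows essentially the same route as the paper: factorize the correlation tensor as $t^{(1,2,3,4)}_{i_1i_2i_3i_4}=t^{(1)}_{i_1}t^{(2)}_{i_2}t^{(3)}_{i_3}t^{(4)}_{i_4}$, recognize $T_{1|2|3|4}$ as the rank-one outer product $T^{(1)}(T^{(2)}\otimes T^{(3)}\otimes T^{(4)})^t$, and combine the multiplicativity of the Euclidean norm with the saturated single-party bound $\|T^{(j)}\|=\sqrt{2(d-1)/d}$ for pure local states. If anything, you are more explicit than the paper about why the Ky Fan $k$-norm is independent of $k$ (a rank-one matrix has a single nonzero singular value), which the paper leaves implicit in its chain of equalities.
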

\begin{proof}
Since $\rho$ is fully separable, $\rho=\rho_{1}\otimes \rho_{2}\otimes \rho_{3}\otimes \rho_{4}$, where $\rho_{1} ,\rho_{2},\rho_{3},\rho_{4}$ are the reduced density matrices of $\rho$. By the calculation, we obtain  $t_{i_1,i_2,i_3,i_4}^{(1,2,3,4)}=t_{i_1}^{(1)}t_{i_2}^{(2)}t_{i_3}^{(3)}t_{i_4}^{(4)}$.
According to the inequality for 1-body correlation tensors, $\parallel T^{(f)}\parallel\leq\sqrt{\frac{2(d-1)}{d}}$ [17], $f=1, 2, 3, 4$, with the
equality holding iff the state is pure, we have
\begin{eqnarray}
\parallel T_{1|2|3|4}\parallel_k&&=\parallel T^{(1)}(T^{(2)}\otimes T^{(3)}\otimes T^{(4)})^t\parallel_k=\parallel T^{(1)}\parallel\cdot\parallel( T^{(2)}\otimes T^{(3)}\otimes T^{(4)})^t\parallel_k\nonumber\\&&=\parallel T^{(1)}\parallel\cdot\parallel ( T^{(2)}\otimes T^{(3)}\otimes T^{(4)})^t\parallel=\parallel T^{(1)}\parallel\cdot\parallel T^{(2)}\otimes T^{(3)}\otimes T^{(4)}\parallel\nonumber\\&&=\parallel T^{(1)}\parallel\cdot\parallel T^{(2)}\parallel\cdot\parallel T^{(3)}\parallel\cdot\parallel T^{(4)}\parallel=\frac{4(d-1)^2}{d^2},
\end{eqnarray}
which proves the Theorem.
\end{proof}
Let $f$, $g$, $h$ and $l$ be any subsystem in a four-partite quantum system. $f\neq g\neq h\neq l\in \{1,2,3,4\}$ means that any two subsystems are not repeatedly selected.
\begin{lemma}
Let $\rho\in H^d_1\otimes H^d_2 \otimes H^d_3 \otimes H^d_4$ be a pure state such that $\rho$ is separable
under at least one bipartition. Then for any $k=1,\cdots,d^2-1$, and
$f\neq g\neq h\neq l\in \{1,2,3,4\}$, we have\\
(i) if $\rho$ is separable under bipartition $f|ghl$, then
\begin{eqnarray}
 \parallel T_{f|ghl}\parallel_k\leq {\frac{4(d-1)\sqrt{d^2+d+1}}{d^2}};
\end{eqnarray}
(ii) if $\rho$ is entangled under bipartition $f|ghl$, then
\begin{eqnarray}
 \parallel T_{f|ghl}\parallel_k\leq {\frac{4\sqrt{k}(d^2-1)}{d^2}}.
\end{eqnarray}
\end{lemma}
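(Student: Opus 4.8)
The plan is to handle the two parts separately, in each case exploiting the structure that the separability assumption forces on the four‑body correlation tensor $T^{(1,2,3,4)}$, and using only the two elementary facts recalled above: that reshaping a tensor into a matrix leaves the sum of squares of its entries — hence its Frobenius norm — unchanged, and that $\|M\|_k\le\sqrt{k}\,\|M\|$ for any matrix $M$ (Cauchy–Schwarz on the $k$ largest singular values). I shall use repeatedly that $tr(\rho^2)=1$ for a pure state, that the reduced states of $\rho$ across the cut $f|ghl$ satisfy $tr(\rho_f^2)=tr(\rho_{ghl}^2)$, and the factorization $d^3-1=(d-1)(d^2+d+1)$.

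\textit{Part (i).} If the pure state $\rho$ is separable across $f|ghl$ then $\rho=\rho_f\otimes\rho_{ghl}$ with $\rho_f$ and $\rho_{ghl}$ both pure. Computing $t^{(1,2,3,4)}_{i_1,i_2,i_3,i_4}=tr\big(\rho\,\lambda^{(f)}_{i_f}\otimes\lambda^{(g)}_{i_g}\otimes\lambda^{(h)}_{i_h}\otimes\lambda^{(l)}_{i_l}\big)$ as in the proof of Lemma 1 shows these coefficients factor as $t^{(f)}_{i_f}\,t^{(g,h,l)}_{i_g,i_h,i_l}$, so $T_{f|ghl}=T^{(f)}\big(T^{(g,h,l)}\big)^{t}$ is an outer product, hence rank one; its unique nonzero singular value is $\|T^{(f)}\|\cdot\|T^{(g,h,l)}\|$, and therefore $\|T_{f|ghl}\|_k=\|T^{(f)}\|\cdot\|T^{(g,h,l)}\|$ for every $k$. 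Substituting $\|T^{(f)}\|\le\sqrt{2(d-1)/d}$ for the pure one‑body tensor and $\|T^{(g,h,l)}\|\le\frac{2}{d}\sqrt{2(d^3-1)/d}$ (the estimate derived just before the lemma) for the pure tripartite state $\rho_{ghl}$, and simplifying via $d^3-1=(d-1)(d^2+d+1)$, yields $\|T_{f|ghl}\|_k\le\frac{4(d-1)\sqrt{d^2+d+1}}{d^2}$.

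\textit{Part (ii).} Here no factorization is available, so I bound the Ky Fan norm by the Frobenius norm: $\|T_{f|ghl}\|_k\le\sqrt{k}\,\|T_{f|ghl}\|=\sqrt{k}\,\|T^{(1,2,3,4)}\|$, the equality holding because $T_{f|ghl}$ is only a reshaping of $T^{(1,2,3,4)}$. It then suffices to show $\|T^{(1,2,3,4)}\|\le\frac{4(d^2-1)}{d^2}$ for a pure four‑partite state. Expanding $tr(\rho^2)=1$ in the generator basis and discarding the nonnegative one‑, two‑ and three‑body contributions gives only $\|T^{(1,2,3,4)}\|\le\frac{4}{d^2}\sqrt{d^4-1}$, which is not sharp enough; to do better I bring in the bipartition $f|ghl$, using that $\rho_f$ and $\rho_{ghl}$ have equal spectra and that $tr(\rho_f^2)\ge 1/d$. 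Feeding these relations, together with the relations among the correlation‑tensor norms at the various levels forced by purity of $\rho$, back into the expansion should yield $\|T^{(1,2,3,4)}\|\le\frac{4(d^2-1)}{d^2}$, whence $\|T_{f|ghl}\|_k\le\frac{4\sqrt{k}\,(d^2-1)}{d^2}$.

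The step I expect to be the real obstacle is this sharp estimate on $\|T^{(1,2,3,4)}\|$ in Part (ii). Part (i) is bookkeeping once the rank‑one structure is noticed, but in Part (ii) one must control the four‑body tensor of an \emph{arbitrary} pure state that is merely assumed entangled across the cut, and the naive purity bound is too lossy. The delicate point is to combine the $tr(\rho^2)=1$ identity with the complementary‑purity relation $tr(\rho_f^2)=tr(\rho_{ghl}^2)\ge 1/d$ so that the reduced one‑, two‑ and three‑body tensors are controlled simultaneously and genuinely pull $\|T^{(1,2,3,4)}\|^2$ down to $\frac{16(d^2-1)^2}{d^4}$ rather than the weaker $\frac{16(d^4-1)}{d^4}$; making this bookkeeping precise, and checking that every retained term carries the favourable sign, is the one computation that needs real care.
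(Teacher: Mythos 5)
Your Part (i) is essentially the paper's own argument: the cut $f|ghl$ forces $T_{f|ghl}=T^{(f)}\,(T^{(g,h,l)})^{t}$ to be rank one, so every Ky Fan norm equals $\parallel T^{(f)}\parallel\cdot\parallel T^{(g,h,l)}\parallel$, and the purity bounds $\parallel T^{(f)}\parallel\leq\sqrt{2(d-1)/d}$ and $\parallel T^{(g,h,l)}\parallel\leq\frac{2}{d}\sqrt{2(d^3-1)/d}$ give $\frac{4(d-1)\sqrt{d^2+d+1}}{d^2}$. That part is correct.

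Part (ii) has a genuine gap, and the route you chose cannot close it. You assert that ``no factorization is available'' and instead try to prove the universal estimate $\parallel T^{(1,2,3,4)}\parallel\leq\frac{4(d^2-1)}{d^2}$ for an arbitrary pure state that is entangled across $f|ghl$. You never actually prove this estimate --- you only say that feeding the purity relations back in ``should yield'' it --- and purity alone gives only $\parallel T^{(1,2,3,4)}\parallel\leq\frac{4\sqrt{d^4-1}}{d^2}$, which you correctly observe is too weak. More importantly, you have discarded the hypothesis that does all the work: the lemma assumes $\rho$ is separable under at least one bipartition. Since $\rho$ is entangled across $f|ghl$, it must therefore be separable across some \emph{other} cut $A_1|A_2$, and then a factorization is available after all: $t^{(1,2,3,4)}_{i_1i_2i_3i_4}$ factors into the correlation tensors of $\rho_{A_1}$ and $\rho_{A_2}$, so the matricization $T_{f|ghl}$ becomes a Kronecker product of reshapings of $T^{(A_1)}$ and $T^{(A_2)}$. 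The paper then bounds $\parallel T_{f|ghl}\parallel_k\leq\sqrt{k}\,\parallel T^{(A_1)}\parallel\cdot\parallel T^{(A_2)}\parallel$ and maximizes over the three possible shapes of the separable cut (one vs.\ three, two vs.\ two, three vs.\ one), obtaining $\frac{4(d-1)\sqrt{k(d^2+d+1)}}{d^2}$, $\frac{4\sqrt{k}(d^2-1)}{d^2}$ and $\frac{4(d-1)\sqrt{k(d^2+d+1)}}{d^2}$ respectively; the two-vs-two case dominates because $(d+1)^2>d^2+d+1$, which yields exactly the stated bound. Without invoking the biseparability hypothesis, your target inequality would have to hold for \emph{every} pure four-partite state; it is not a consequence of the relations $tr(\rho^2)=1$ and $tr(\rho_f^2)=tr(\rho_{ghl}^2)\geq 1/d$ that you list, and establishing it is the entire content of the claim, which your proposal leaves open.
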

\begin{proof}
(i) If $\rho$ is separable under bipartition $f|ghl$, $\rho=\rho_{f}\otimes\rho_{ghl}$, it
follows from $\parallel T^{(f,g,h)}\parallel\leq\frac{2}{d}\sqrt{\frac{2(d^3-1)}{d}}$ that
\begin{eqnarray}
\parallel T_{f|ghl}\parallel_k&&=\parallel T^{(f)}(T^{(g,h,l)})^t\parallel_k=\parallel T^{(f)}\parallel\cdot\parallel(T^{(g,h,l)})^t\parallel_k\nonumber\\&&=\parallel T^{(f)}\parallel\cdot\parallel (T ^{(g,h,l)})^t\parallel=\parallel T^{(f)}\parallel\cdot\parallel T^{(g,h,l)}\parallel\nonumber\\&&\leq\frac{4(d-1)\sqrt{d^2+d+1}}{d^2}.
\end{eqnarray}

(ii) $\rho$ is entangled under bipartition $f|ghl$, without loss of generality, say, under the bipartition $1|234$.
If $\rho$ is separable under some bipartition of one subsystem vs the rest three subsystems, we have
\begin{eqnarray}
\parallel T_{f|ghl}\parallel_k\leq\frac{4(d-1)\sqrt{d^2+d+1}}{d^2}.
\end{eqnarray}
If $\rho$ is separable under some bipartition of two subsystems vs the rest two subsystems,  from the inequality of 2-body correlation tensors $\parallel T^{(f,g)}\parallel\leq\sqrt{\frac{4(d^2-1)}{d^2}}$ [17],  we have
\begin{eqnarray}
\parallel T_{f|ghl}\parallel_k&&=\parallel T^{(\underline{f},{g})}\otimes(T^{({h},{l})})^t\parallel_k=\parallel T^{(\underline{f},{g})}\parallel_k\cdot\parallel (T^{({h},{l})})^t\parallel_k\nonumber\\&&\leq\sqrt{k}\parallel T^{(\underline{f},{g})}\parallel\cdot\parallel T^{({h},{l})}\parallel\leq{\frac{4\sqrt{k}(d^2-1)}{d^2}},
\end{eqnarray}
where we have used the inequality $\parallel M\parallel_k\leq k\parallel M\parallel$ for any matrix M.
If $\rho$ is separable under some bipartition of three  subsystems vs the rest one subsystem, we have
\begin{eqnarray}
\parallel T_{f|ghl}\parallel_k&&=\parallel T^{(\underline{f},g,h)}\otimes(T^{(l)})^t\parallel_k=\parallel T^{(\underline{f},g,h)}\parallel_k\cdot\parallel (T^{(l)})^t\parallel_k\nonumber\\&&\leq\sqrt{k}\parallel T^{(\underline{f},g,h)}\parallel\cdot\parallel T^{(l)}\parallel\leq\frac{4(d-1)\sqrt{k(d^2+d+1)}}{d^2}.
\end{eqnarray}
Hence, if $\rho$ is entangled under bipartition $1|234$, we have
$\parallel T_{f|ghl}\parallel_k\leq max\{\frac{4(d-1)\sqrt{d^2+d+1}}{d^2}, \frac{4\sqrt{k}(d^2-1)}{d^2},$\\
$\frac{4(d-1)\sqrt{k(d^2+d+1)}}{d^2}\}$ $={\frac{4\sqrt{k}(d^2-1)}{d^2}}$. Similar discussion applies to other bipartitions $2|134,$ $3|124$ and $4|123$. It indicates that these norms have the same upper bound. Hence, $\parallel T_{f|ghl}\parallel_k\leq {\frac{4\sqrt{k}(d^2-1)}{d^2}}$, if $\rho$ is entangled under bipartition $f|ghl$.
\end{proof}

We may analyze the bipartition $fgh|l$ by using similar methods above and obtain the following Lemma.

\begin{lemma}
Let $\rho\in H^d_1\otimes H^d_2 \otimes H^d_3 \otimes H^d_4$ be a pure state such that $\rho$ is separable
under at least one bipartition. Then for any $k=1,\cdots,d^2-1$, and
$f\neq g\neq h\neq l\in \{1,2,3,4\}$, we have\\
(i) if $\rho$ is separable under bipartition $fgh|l$, then
\begin{eqnarray}
 \parallel T_{fgh|l}\parallel_k\leq  {\frac{4(d-1)\sqrt{d^2+d+1}}{d^2}};
\end{eqnarray}
(ii) if $\rho$ is entangled under bipartition $fgh|l$, then
\begin{eqnarray}
\parallel T_{fgh|l}\parallel_k\leq{\frac{4\sqrt{k}(d^2-1)}{d^2}}.
\end{eqnarray}
\end{lemma}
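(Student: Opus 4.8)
The plan is to mirror the proof of Lemma 2, exchanging the roles of the distinguished single subsystem and the distinguished triple. I will use the three norm bounds already available: the one-body bound $\parallel T^{(f)}\parallel\leq\sqrt{2(d-1)/d}$ and the two-body bound $\parallel T^{(f,g)}\parallel\leq\sqrt{4(d^2-1)/d^2}$ from [17], and the three-body bound $\parallel T^{(f,g,h)}\parallel\leq\frac{2}{d}\sqrt{2(d^3-1)/d}$ established in the excerpt (valid for any choice of the three subsystems by symmetry). I will also use: the nonzero singular values of a Kronecker product $A\otimes B$ are the pairwise products of those of $A$ and $B$; a rank-one factor contributes only its Frobenius norm as a singular value; and the Ky Fan bound $\parallel M\parallel_k\leq\sqrt{k}\,\parallel M\parallel$ (Cauchy--Schwarz on the top $k$ singular values). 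Finally, note the elementary inequality $(d-1)\sqrt{d^2+d+1}=\sqrt{(d-1)(d^3-1)}\leq d^2-1$, valid for all $d\geq2$ since after squaring it reduces to $d^2+d+1\leq(d+1)^2$.

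For part (i), suppose $\rho=\rho_{fgh}\otimes\rho_l$. A direct computation of the coefficients gives $t^{(1,2,3,4)}_{i_1,i_2,i_3,i_4}=t^{(f,g,h)}_{i_f,i_g,i_h}\,t^{(l)}_{i_l}$, so $T_{fgh|l}$ equals (up to an irrelevant reordering of indices) the rank-one outer product $T^{(f,g,h)}(T^{(l)})^{t}$. Hence $\parallel T_{fgh|l}\parallel_k=\parallel T^{(f,g,h)}\parallel\cdot\parallel T^{(l)}\parallel$ for every $k$, and substituting the three-body and one-body bounds and simplifying via $(d-1)(d^3-1)=(d-1)^2(d^2+d+1)$ gives $\parallel T_{fgh|l}\parallel_k\leq\frac{4(d-1)\sqrt{d^2+d+1}}{d^2}$.

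For part (ii), let $\rho$ be entangled under $fgh|l$; without loss of generality take this to be $123|4$. Since $\rho$ is separable under at least one bipartition and that bipartition is not $123|4$ (equivalently $4|123$), it is either a one-versus-three split whose singleton lies in $\{1,2,3\}$, or a two-versus-two split. In the one-versus-three case the Frobenius norm factorizes as $\parallel T^{(1,2,3,4)}\parallel=\parallel T^{(a)}\parallel\cdot\parallel T^{(bcd)}\parallel$ with $\{a\}$ the singleton; combining this with $\parallel T_{123|4}\parallel_k\leq\sqrt{k}\,\parallel T_{123|4}\parallel=\sqrt{k}\,\parallel T^{(1,2,3,4)}\parallel$ and the one- and three-body bounds yields $\parallel T_{123|4}\parallel_k\leq\frac{4(d-1)\sqrt{k(d^2+d+1)}}{d^2}$. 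In the two-versus-two case the Frobenius norm factorizes as a product of two two-body norms, and the same route with the two-body bound yields $\parallel T_{123|4}\parallel_k\leq\frac{4\sqrt{k}(d^2-1)}{d^2}$. Taking the maximum of the bounds that can occur and using $(d-1)\sqrt{d^2+d+1}\leq d^2-1$ together with $k\geq1$, all of them are dominated by $\frac{4\sqrt{k}(d^2-1)}{d^2}$; the same reasoning applies to the other labellings of $fgh|l$, which proves the claim.

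The genuinely routine work is the bookkeeping of matricizations; the one place that needs care is that for certain sub-splits (for instance $\rho=\rho_{13}\otimes\rho_{24}$ under $123|4$) the distinguished index sits in the middle of the composite row index of $T_{123|4}$, so $T_{123|4}$ does not factor as a clean Kronecker product. I avoid this by bounding $\parallel T_{123|4}\parallel_k$ through $\sqrt{k}\,\parallel T_{123|4}\parallel=\sqrt{k}\,\parallel T^{(1,2,3,4)}\parallel$ — the Frobenius norm being independent of the matricization — and only then factoring it into lower-body norms. Beyond this I do not anticipate any real obstacle.
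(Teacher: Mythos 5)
Your proof is correct and follows essentially the same route as the paper, which itself gives no separate argument for this lemma beyond saying it follows by the ``similar methods'' of Lemma~2: the same case analysis over the bipartitions under which $\rho$ could be separable, the same one-, two- and three-body correlation-tensor bounds, and the same maximization using $(d-1)\sqrt{d^2+d+1}\le d^2-1$. Your device of routing the interleaved cases (e.g.\ $\rho=\rho_{13}\otimes\rho_{24}$ under the cut $123|4$) through the matricization-independent Frobenius norm is a tidy substitute for the paper's underlined-index permuted-Kronecker bookkeeping and yields identical bounds.
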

Now we consider the relations between the correlation tensors and the separability under the bipartition $fg|hl$.
\begin{lemma}
Let $\rho\in H^d_1\otimes H^d_2 \otimes H^d_3 \otimes H^d_4$ be a pure state such that $\rho$ is separable
under at least one bipartition. Then for any $k=1,\cdots,d^2-1$, and
$f\neq g\neq h\neq l\in \{1,2,3,4\}$, we have\\
(i) if $\rho$ is separable under bipartition $fg|hl$, then
\begin{eqnarray}
 \parallel T_{fg|hl}\parallel_k\leq {\frac{4(d^2-1)}{d^2}};
\end{eqnarray}
(ii) if $\rho$ is entangled under bipartition $fg|hl$, then
\begin{eqnarray}
\parallel T_{fg|hl}\parallel_k\leq {\frac{4k(d^2-1)}{d^2}}.
\end{eqnarray}
\end{lemma}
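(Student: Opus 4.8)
The plan is to follow the two-case structure used in the proofs of Lemmas 2 and 3: treat separately the case in which $\rho$ is already separable across the cut $fg|hl$ and the case in which $\rho$ is entangled across that cut but, by hypothesis, separable across some other bipartition.

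\emph{Part (i).} If $\rho$ is separable under $fg|hl$, then $\rho=\rho_{fg}\otimes\rho_{hl}$, and a direct computation gives $t_{i_f,i_g,i_h,i_l}^{(f,g,h,l)}=t_{i_f,i_g}^{(f,g)}\,t_{i_h,i_l}^{(h,l)}$. With the index convention defining $T_{fg|hl}$, this exhibits $T_{fg|hl}$ as the rank-one outer product $T^{(f,g)}\,(T^{(h,l)})^{t}$, with $T^{(f,g)}$ and $T^{(h,l)}$ read as column vectors in the appropriate index orders; a rank-one matrix has a single nonzero singular value, so its Ky Fan $k$-norm equals its Frobenius norm for every $k$, i.e.\ $\|T_{fg|hl}\|_k=\|T^{(f,g)}\|\cdot\|T^{(h,l)}\|$. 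Applying the $2$-body bound $\|T^{(f,g)}\|\leq\sqrt{4(d^2-1)/d^2}$ from [17] to both factors yields $\|T_{fg|hl}\|_k\leq 4(d^2-1)/d^2$.

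\emph{Part (ii).} Here $\rho$ is entangled across $fg|hl$; since $\rho$ is separable across at least one bipartition, the separating bipartition is either a one-versus-three cut $m|npq$ or one of the two remaining two-versus-two cuts. If $\rho=\rho_{m}\otimes\rho_{npq}$ for a one-versus-three cut, then $t^{(f,g,h,l)}=t^{(m)}\otimes t^{(n,p,q)}$ as tensors, and reading off indices shows $T_{fg|hl}$ equals the Kronecker product of the vector $T^{(m)}$ with one of the matricizations of $T^{(n,p,q)}$ (a $T^{(\underline{n},p,q)}$-type or $T^{(\underline{n,p},q)}$-type matrix); since the singular values of a Kronecker product are the pairwise products of those of the factors, $\|T_{fg|hl}\|_k=\|T^{(m)}\|\cdot\|(\text{matricized }T^{(n,p,q)})\|_k\leq\sqrt{k}\,\|T^{(m)}\|\cdot\|T^{(n,p,q)}\|$, using $\|M\|_k\leq\sqrt{k}\|M\|$ and invariance of the Frobenius norm under matricization, and then the $1$-body bound $\|T^{(m)}\|\leq\sqrt{2(d-1)/d}$ and $3$-body bound $\|T^{(n,p,q)}\|\leq\frac{2}{d}\sqrt{2(d^3-1)/d}$ give $\|T_{fg|hl}\|_k\leq 4\sqrt{k}\,(d-1)\sqrt{d^2+d+1}/d^2$. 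If instead $\rho$ splits across one of the other two-versus-two cuts, the same index bookkeeping shows $T_{fg|hl}$ is a Kronecker product of two $2$-body correlation tensors, up to a permutation of columns which leaves the singular values unchanged; hence $\|T_{fg|hl}\|_k\leq k\,\|T^{(\cdot,\cdot)}\|\cdot\|T^{(\cdot,\cdot)}\|\leq 4k(d^2-1)/d^2$ by the $2$-body bound and the inequality $\|M\|_k\leq k\|M\|$. Finally, $4\sqrt{k}\,(d-1)\sqrt{d^2+d+1}/d^2\leq 4k(d^2-1)/d^2$ for every $k\geq1$ (equivalently $d^2+d+1\leq k(d+1)^2$), so the maximum over the admissible separating bipartitions is $4k(d^2-1)/d^2$, which is the claimed bound; the argument is symmetric under relabelling, so all three two-versus-two cuts share this upper bound.

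The step I expect to be the main obstacle is the index bookkeeping in part (ii): one must verify, for each admissible separating bipartition, that $T_{fg|hl}$ genuinely reshapes into (a column permutation of) a Kronecker product of strictly lower-order correlation tensors. This is what forces its Frobenius norm to factor as a product of $1$-, $2$-, or $3$-body norms — rather than being controlled only by the much larger $4$-body norm — and makes the ``singular values of a Kronecker product are products of singular values'' principle available, so that the Ky Fan $k$-norm can be dominated by $\sqrt{k}$ or $k$ times such a product. The remaining ingredients — the inequalities $\|M\|_k\leq\sqrt{k}\|M\|$ and $\|M\|_k\leq k\|M\|$, invariance of the Frobenius norm under matricization, and the $1$-, $2$-, $3$-body norm bounds of [17] and of the excerpt — are all already at hand.
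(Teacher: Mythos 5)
Your proof is correct and follows essentially the same route as the paper's: exhibit $T_{fg|hl}$ as a (permuted) Kronecker/outer product of lower-order correlation tensors for each possible separating bipartition, use multiplicativity of singular values together with $\parallel M\parallel_k\leq\sqrt{k}\parallel M\parallel$ (or $k\parallel M\parallel$) and the $1$-, $2$-, $3$-body norm bounds, and take the maximum over cases. The only difference is presentational: you treat all one-versus-three cuts uniformly and make the permutation-invariance and Kronecker-singular-value facts explicit, whereas the paper enumerates the subcases in a table.
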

\begin{proof}
(i) If $\rho$ is separable under bipartition $fg|hl$, $\rho=\rho_{fg}\otimes\rho_{hl}$, then
\begin{eqnarray}
\parallel T_{fg|hl}\parallel_k&&=\parallel T^{(f,g)}(T^{(h,l)})^t\parallel_k=\parallel T^{(f,g)}\parallel\cdot\parallel(T^{(h,l)})^t\parallel_k=\parallel T^{(f,g)}\parallel\cdot\parallel (T^{(h,l)})^t\parallel\nonumber\\&&=\parallel T^{(f,g)}\parallel\cdot\parallel T^{(h,l)}\parallel\leq\frac{4(d^2-1)}{d^2},
\end{eqnarray}
by using the inequality for 2-body correlation tensors.

(ii) $\rho$ is entangled under bipartition $fg|hl$, say, $12|34$. If $\rho$ is separable under some bipartition of one subsystem vs the rest three subsystems, we have
\begin{eqnarray}
\parallel T_{fg|hl}\parallel_k&&=\parallel T^{(f)}\otimes T^{(\underline{{g}},h,l)}\parallel_k=\parallel T^{(f)}\parallel\cdot\parallel T^{(\underline{g},h,l)}\parallel_k\nonumber\\&&\leq\sqrt{k}\parallel T^{(f)}\parallel\cdot\parallel T^{(\underline{g},h,l)}\parallel\leq\frac{4(d-1)\sqrt{k(d^2+d+1)}}{d^2}.
\end{eqnarray}
If $\rho$ is separable under some bipartition of two subsystems vs the rest two subsystems, we have
\begin{eqnarray}
\parallel T_{fg|hl}\parallel_k\leq{\frac{4(d^2-1)}{d^2}}.
\end{eqnarray}
If $\rho$ is separable under some bipartition of three  subsystems vs the rest one subsystem, we have
\begin{eqnarray}
\parallel T_{fg|hl}\parallel_k&&=\parallel T^{(\underline{f,g},h)}\otimes(T^{(l)})^t\parallel_k=\parallel T^{(\underline{f,g},h)}\parallel_k\cdot\parallel (T^{(l)})^t\parallel_k\nonumber\\&&\leq\sqrt{k}\parallel T^{(\underline{f,g},h)}\parallel\cdot\parallel T^{(l)}\parallel\leq\frac{4(d-1)\sqrt{k(d^2+d+1)}}{d^2}.
\end{eqnarray}
Hence, if $\rho$ is entangled under bipartition $12|34$, we have $\parallel T_{fg|hl}\parallel_k\leq$max$ \{\frac{4(d-1)\sqrt{k(d^2+d+1)}}{d^2},\\ \frac{4(d^2-1)}{d^2}\}=\frac{4(d-1)\sqrt{k(d^2+d+1)}}{d^2},$ $k\geq2$.
If $k=1$, $\parallel T_{fg|hl}\parallel_1\leq\frac{4(d^2-1)}{d^2}$.

Similarly, if $\rho$ is entangled under bipartition $13|24$, $14|23$  $23|14$, $24|13$ and $34|12$, we have the upper bound of the norm  as follows.
Let $i\ vs\ j$ denote that $\rho$ is separable under some bipartition of $i$ subsystem vs the rest $j$ subsystems.

\begin{tabular}{|c|c|c|c|c|c|}
\hline
  & $1\ vs\ 3$ & $2\ vs\ 2$  &  $3\ vs\ 1$   \\
\hline
$13|24$ & \tabincell{c}{$\parallel T_{13|24}\parallel_k~~~~~~~~~~~$\\$=\parallel T^{(f)}\otimes T^{(g,\underline{h},l)}\parallel_k$ \\$\leq\frac{4(d-1)\sqrt{k(d^2+d+1)}}{d^2}~~~~~~$ }
& \tabincell{c}{$\parallel T_{13|24}\parallel_k~~~~~~~~~~$\\$=\parallel T^{(\underline{{f}},g)}\otimes T^{(\underline{h},l)}\parallel_k$ \\$\leq\frac{4k(d^2-1)}{d^2}~~~~~~~~~~~~~~~$ } &
\tabincell{c}{$\parallel T_{13|24}\parallel_k~~~~~~~~~~~~~~$\\$=\parallel T^{(\underline{f},g,\underline{h})}\otimes (T^{(l)})^t\parallel_k$ \\$\leq\frac{4(d-1)\sqrt{k(d^2+d+1)}}{d^2}~~~~~~~~~$ }  \\
\hline
$14|23$ & \tabincell{c}{$\parallel T_{14|23}\parallel_k~~~~~~~~~$\\$=\parallel T^{(f)}\otimes T^{(g,h,\underline{l})}\parallel_k$ \\$\leq\frac{4(d-1)\sqrt{k(d^2+d+1)}}{d^2}~~~~~$ }
& \tabincell{c}{$\parallel T_{14|23}\parallel_k~~~~~~~~~$\\$=\parallel T^{(\underline{f},g)}\otimes T^{(h,\underline{l})}\parallel_k$ \\$\leq\frac{4k(d^2-1)}{d^2}~~~~~~~~~~~~~~~$ } &
\tabincell{c}{$\parallel T_{14|23}\parallel_k~~~~~~~~~~~~$\\$=\parallel T^{(\underline{f},g,h)}\otimes T^{(l)}\parallel_k$ \\$\leq\frac{4(d-1)\sqrt{k(d^2+d+1)}}{d^2}~~~~~$ }  \\
\hline
$23|14$ & \tabincell{c}{$\parallel T_{23|14}\parallel_k~~~~~~~~~~~$\\$=\parallel T^{(f)^t}\otimes T^{(\underline{g,h},l)}\parallel_k$ \\$\leq\frac{4(d-1)\sqrt{k(d^2+d+1)}}{d^2}~~~~~$ }
& \tabincell{c}{$\parallel T_{23|14}\parallel_k~~~~~~~~~~$\\$=\parallel T^{(f,\underline{g})}\otimes T^{(\underline{h},l)}\parallel_k$ \\$\leq\frac{4k(d^2-1)}{d^2}~~~~~~~~~~~~~~~$ } &
\tabincell{c}{$\parallel T_{23|14}\parallel_k~~~~~~~~~~~~~~$\\$=\parallel T^{(f,\underline{g,h})}\otimes (T^{(l)})^t\parallel_k$ \\$\leq\frac{4(d-1)\sqrt{k(d^2+d+1)}}{d^2}~~~~~~~~$ }  \\
\hline
$24|13$ & \tabincell{c}{$\parallel T_{24|13}\parallel_k~~~~~~~~~~$\\$=\parallel T^{(f)^t}\otimes T^{(\underline{g},h,\underline{l})}\parallel_k$ \\$\leq\frac{4(d-1)\sqrt{k(d^2+d+1)}}{d^2}~~~~~$ }
& \tabincell{c}{$\parallel T_{24|13}\parallel_k~~~~~~~~~~~$\\$=\parallel T^{(f,\underline{g})}\otimes T^{(h,\underline{l})}\parallel_k$ \\$\leq\frac{4k(d^2-1)}{d^2}~~~~~~~~~~~~~~~$ } &
\tabincell{c}{$\parallel T_{24|13}\parallel_k~~~~~~~~~~~$\\$=\parallel T^{(f,\underline{g},h)}\otimes T^{(l)}\parallel_k$ \\$\leq\frac{4(d-1)\sqrt{k(d^2+d+1)}}{d^2}~~~~~$ }  \\
\hline
$34|12$   &\tabincell{c}{$\parallel T_{34|12}\parallel_k~~~~~~~~~~~~~~$ \\$=\parallel (T^{(f)})^t\otimes T^{(g,\underline{h,l})}\parallel_k$\\ $\leq\frac{4(d-1)\sqrt{k(d^2+d+1)}}{d^2}~~~~~~~~$ } & \tabincell{c}{$\parallel T_{34|12}\parallel_k~~~~~~~~~~~$\\$=\parallel (T^{f,g})^t\otimes T^{(h,l)}\parallel_k$ \\$\leq\frac{4(d^2-1)}{d^2}~~~~~~~~~~~~~~~~~~$ }   &\tabincell{c}{$\parallel T_{34|12}\parallel_k~~~~~~~~~~~$\\$=\parallel T^{(f,g,\underline{h})}\otimes T^{(l)}\parallel_k$ \\$\leq\frac{4(d-1)\sqrt{k(d^2+d+1)}}{d^2}~~~~~$ } \\
\hline
\end{tabular}\\
Altogether we have $\parallel T_{fg|hl}\parallel_k\leq {\frac{4k(d^2-1)}{d^2}}$ if $\rho$ is entangled under bipartition $fg|hl$.
\end{proof}
Next we  present a sufficient condition to detect GME for four-partite systems. By the Lemma 2 we have that $\parallel T_{f|ghl}\parallel_k\leq {\frac{4(d-1)\sqrt{d^2+d+1}}{d^2}}$ if $\rho$ is separable, and $\parallel T_{f|ghl}\parallel_k\leq{\frac{4\sqrt{k}(d^2-1)}{d^2}}$  if $\rho$ is entangled. However, $\parallel T_{fg|hl}\parallel_k\leq{\frac{4k(d^2-1)}{d^2}}$  is a rather weak condition.
We define the average matricization norm, $M_k=\frac{1}{4}(\parallel T_{1|234}\parallel_k+\parallel T_{2|134}\parallel_k+\parallel  T_{3|124}\parallel_k+\parallel T_{4|123}\parallel_k)$.

\begin{thm}
If $\rho$ is a four-qudit state, and
\begin{eqnarray}\label{thm1}
M_k(\rho)>\frac{(d-1)[\sqrt{d^2+d+1}+3(d+1)\sqrt{k}]}{d^2}
\end{eqnarray}
for any $k\in\{1,2,3,\cdots,d^2-1\}$, then $\rho$ is genuine multipartite entangled.
\label{1}
\end{thm}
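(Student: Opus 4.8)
The plan is to argue by contraposition: I will show that if $\rho$ is \emph{not} genuine multipartite entangled then $M_k(\rho)\le\frac{(d-1)[\sqrt{d^2+d+1}+3(d+1)\sqrt k]}{d^2}$. A biseparable $\rho$ can be written as $\rho=\sum_\alpha p_\alpha|\psi_\alpha\rangle\langle\psi_\alpha|$ where each $|\psi_\alpha\rangle$ is a product state across some bipartition of $\{1,2,3,4\}$, the bipartition possibly depending on $\alpha$. Since every matricization $T_{f|ghl}(\cdot)$ is linear in the state and $\|\cdot\|_k$ is a norm, the triangle inequality gives $\|T_{f|ghl}(\rho)\|_k\le\sum_\alpha p_\alpha\|T_{f|ghl}(|\psi_\alpha\rangle)\|_k$, and averaging over the four one-vs-three matricizations yields $M_k(\rho)\le\sum_\alpha p_\alpha M_k(|\psi_\alpha\rangle)$. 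Hence it suffices to bound $M_k(|\psi\rangle)$ by the stated quantity for every \emph{pure} biseparable $|\psi\rangle$ and then take the worst case over the bipartition type.

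First I would handle the case where $|\psi\rangle$ is a product across a one-vs-three bipartition, say $\rho=\rho_f\otimes\rho_{ghl}$. For the matching matricization Lemma~2(i) gives the small estimate $\|T_{f|ghl}\|_k\le\frac{4(d-1)\sqrt{d^2+d+1}}{d^2}$, while for each of the other three one-vs-three matricizations $|\psi\rangle$ may be entangled across that cut, so the worst bound available from Lemma~2 is $\|T_{g|fhl}\|_k,\|T_{h|fgl}\|_k,\|T_{l|fgh}\|_k\le\frac{4\sqrt k(d^2-1)}{d^2}=\frac{4\sqrt k(d-1)(d+1)}{d^2}$ (smaller if it happens to be separable across that cut too). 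Averaging the four terms reproduces the threshold exactly:
\[
M_k(|\psi\rangle)\le\tfrac14\Big[\tfrac{4(d-1)\sqrt{d^2+d+1}}{d^2}+3\cdot\tfrac{4\sqrt k(d-1)(d+1)}{d^2}\Big]=\frac{(d-1)[\sqrt{d^2+d+1}+3(d+1)\sqrt k]}{d^2}.
\]
Fully separable pure states (for which Lemma~1 also applies) and pure states of the form $|\phi\rangle_{fg}\otimes|h_0\rangle\otimes|l_0\rangle$ are themselves products across some one-vs-three bipartition, so they are subsumed here.

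I expect the main obstacle to be the genuinely two-vs-two case, $|\psi\rangle=|\phi\rangle_{fg}\otimes|\chi\rangle_{hl}$ with \emph{both} factors entangled, where $|\psi\rangle$ is a product across no one-vs-three bipartition and Lemma~2 supplies only the weak bound $\frac{4\sqrt k(d^2-1)}{d^2}$ for each matricization — too large to conclude. Here I would use the factorization $t^{(1,2,3,4)}_{i_1i_2i_3i_4}=t^{(f,g)}_{i_fi_g}\,t^{(h,l)}_{i_hi_l}$: up to a permutation of the column index, each one-vs-three matricization is a Kronecker product of a $2$-body correlation matrix with a $1\times(d^2-1)^2$ row vector, e.g. $T_{f|ghl}=T^{(\underline{f},g)}\otimes(T^{(h,l)})^t$, so $\|T_{f|ghl}\|_k=\|T^{(h,l)}\|\cdot\|T^{(\underline{f},g)}\|_k$ and likewise for the other three, giving $M_k(|\psi\rangle)=\tfrac12\big(\|T^{(h,l)}\|\,\|T^{(\underline{f},g)}\|_k+\|T^{(f,g)}\|\,\|T^{(\underline{h},l)}\|_k\big)$. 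What is needed beyond [17] is a bound on the Ky Fan norm of a $2$-body correlation matrix that is sharper than the Frobenius bound; the natural candidate is the operator-norm estimate $\sigma_1(T^{(\underline{f},g)})\le\frac{2(d-1)}{d}$, which holds because $\sum_i u_i\lambda_i$ has spectral radius at most $\sqrt{2(d-1)/d}$ for a unit vector $u$, so $|\langle(\sum_iu_i\lambda_i)\otimes(\sum_jv_j\lambda_j)\rangle|\le\frac{2(d-1)}{d}$. Combined with $\|T^{(f,g)}\|\le\frac{2\sqrt{d^2-1}}{d}$ this gives $M_1(|\psi\rangle)\le\frac{4(d-1)\sqrt{d^2-1}}{d^2}$, which one checks is below the $k=1$ threshold for all $d\ge2$; controlling $\|T^{(\underline{f},g)}\|_k$ sharply enough for $k\ge2$ in this case is the delicate point, the threshold being tight precisely at $k=1$ and the one-vs-three analysis being what fixes its exact form. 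Finally, substituting the pure-state bounds into the convexity reduction and applying contraposition completes the proof.
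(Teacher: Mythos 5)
Your overall strategy --- convexity of $M_k$ under the biseparable decomposition, followed by a case analysis on the bipartition type of each pure component --- is exactly the argument the paper intends: the paper in fact states Theorem~1 with no proof at all, relying only on the remarks following Lemma~4, and your one-vs-three computation (one matricization bounded by $\frac{4(d-1)\sqrt{d^2+d+1}}{d^2}$ via Lemma~2(i), the other three by $\frac{4\sqrt{k}(d^2-1)}{d^2}$ via Lemma~2(ii), then averaged) is precisely how the threshold in (15) is manufactured. The obstacle you flag in the two-vs-two case is therefore the crux: a pure state $|\phi\rangle_{fg}\otimes|\chi\rangle_{hl}$ with both factors entangled is biseparable yet may be entangled under all four one-vs-three cuts, and Lemma~2 then yields only $M_k\leq\frac{4\sqrt{k}(d^2-1)}{d^2}$, which exceeds the threshold since $(d+1)\sqrt{k}>\sqrt{d^2+d+1}$ for every $k\geq1$. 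Your $k=1$ repair via the operator-norm estimate $\sigma_1(T^{(f,g)})\leq\frac{2(d-1)}{d}$ is correct and does close that case.

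For $k\geq2$, however, the gap you leave open cannot be closed, because the inequality you would need is false. Take $d=2$ and $\rho=|\Phi^+\rangle\langle\Phi^+|_{12}\otimes|\Phi^+\rangle\langle\Phi^+|_{34}$: each two-body correlation matrix is $\mathrm{diag}(1,-1,1)$, so every one-vs-three matricization satisfies $T_{f|ghl}T_{f|ghl}^t=3I_3$ and has all three singular values equal to $\sqrt{3}$, giving $M_3(\rho)=3\sqrt{3}\approx5.20$, whereas the right-hand side of (15) at $k=3$ is $\frac{\sqrt{7}+9\sqrt{3}}{4}\approx4.56$. A manifestly biseparable state thus satisfies the hypothesis at $k=d^2-1$, and a product of two maximally entangled pairs does the same for every $d$ (one checks $(d-1)(d+1)^3>d^2+d+1$). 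Consequently, if ``for any $k$'' is read as ``for some $k$'' --- the reading under which the criterion is useful --- Theorem~1 is false and no argument can prove it; if it is read as ``for all $k$,'' then your $k=1$ analysis already completes the contrapositive (every biseparable state obeys $M_1\leq\frac{(d-1)[\sqrt{d^2+d+1}+3(d+1)]}{d^2}$), but the values $k\geq2$ then contribute nothing. Either way you have correctly located the defect the paper passes over in silence; the only thing missing from your write-up is the explicit counterexample showing that the two-vs-two case for $k\geq2$ is not merely delicate but unprovable as stated.
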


{\sf Remark 1:} Compared with the Theorem 3 in [17] for four-qubit states, our result detects
GME for any general four-qudit states.

\section{Detection of GME for Multipartite Quantum States }
In this section, we study the GME for multipartite qudit states. Any n-partite density matrix $\rho\in H_1^d\otimes H_2^d\otimes \cdots\otimes H_n^d$ can be expressed as
\begin{eqnarray}
\rho&&=\frac{1}{d^n}I\otimes \cdots\otimes I+\frac{1}{2d^{n-1}}\sum_{j_1=1}^n\sum^{d^2-1}_{i_1=1} t_{i_1}^{(j_1)} \lambda_{i_1}^{(j_1)}\otimes I\otimes\cdots\otimes I+\cdots\nonumber\\&&~~
+\frac{1}{2^n}\sum_{i_1,\cdots,i_n=1}^{d^2-1} t_{i_1,\cdots,i_n}^{(1,\cdots,n)}\lambda_{i_1}^{(1)}\otimes \lambda_{i_2}^{(2)}\otimes \cdots\otimes \lambda_{i_n}^{(n)},
\end{eqnarray}
where $(j_1)$ represents the position of $\lambda_{i_1}$ in the tensor product, $t_{i_1}^{(j_1)}=tr(\rho\lambda_{i_1}^{(j_1)}\otimes I\otimes\cdots\otimes I), \cdots,t_{i_1,\cdots,i_n}^{(1,\cdots,n)}=tr(\rho\lambda_{i_1}^{(1)}\otimes \lambda_{i_2}^{(2)}\otimes \cdots\otimes \lambda_{i_n}^{(n)})$, and $T^{(j_1)},$ $\cdots,T^{(1,\cdots, n)}$ are the vectors (tensors) with elements $t_{i_1}^{(j_1)},\cdots,t_{i_1,\cdots,i_n}^{(1,\cdots,n)}$, respectively.

For a pure state $\rho$, one has
\begin{eqnarray}
tr(\rho^2)&&=\frac{1}{d^n}+\frac{1}{2d^{n-1}}\sum_{j_1}^n\sum_{i_1}^{d^2-1}(t_{i_1}^{(j_1)})^2+
\cdots+{\frac{1}{2^n}\sum_{i_1,\cdots,i_n}^{d^2-1}(t_{i_1,\cdots,i_n}^{(1,\cdots,n)})^2=1.}
\end{eqnarray}
Hence
\begin{eqnarray}
\sum_{i_1,\cdots,i_n}^{d^2-1}(t_{i_1,\cdots,i_n}^{(1,\cdots,n)})^2&&={2^n-\frac{2^n}{d^n}-\cdots-
\frac{2^n}{2d^{n-1}}\sum_{j_1}^n\sum_{i_1}^{d^2-1}(t_{i_1}^{(j_1)})^2\leq \frac{2^n(d^n-1)}{d^n},}
\end{eqnarray}
which implies that
\begin{equation}\label{l3}
\parallel T^{(1,2,\cdots,n)}\parallel=\sqrt{\sum_{i_1,\cdots,i_n}^{d^2-1}(t_{i_1,\cdots,i_n}^{(1,\cdots,n)})^2}
\leq\sqrt{\frac{2^n(d^n-1)}{d^n}}.
\end{equation}
We now consider multipartite systems and their $T$ matrices.

\begin{thm}
Let $\rho\in H^d_1\otimes\cdots \otimes H^d_n $ be a pure state. If $\rho$ is fully separable, then for any $k=1,\cdots,d^2-1$,
\begin{eqnarray}
 \parallel T_{1|\cdots|n}\parallel_k=\sqrt{\frac{2^n(d-1)^n}{d^n}}.
\end{eqnarray}
\end{thm}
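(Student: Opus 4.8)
The plan is to mimic exactly the structure of the proof of Lemma~1, which is the $n=4$ case of this statement, and push it through for general $n$. First I would note that full separability means $\rho=\rho_1\otimes\rho_2\otimes\cdots\otimes\rho_n$ where each $\rho_j$ is the reduced density matrix on $H^d_j$. Expanding each factor in the $\mathfrak{su}(d)$ basis and multiplying out, one gets by direct computation that the top-level correlation tensor factorizes: $t^{(1,\cdots,n)}_{i_1,\cdots,i_n}=t^{(1)}_{i_1}t^{(2)}_{i_2}\cdots t^{(n)}_{i_n}$. Consequently the matricization $T_{1|\cdots|n}$, viewed as a $(d^2-1)\times(d^2-1)^{n-1}$ matrix, equals the outer product $T^{(1)}\bigl(T^{(2)}\otimes T^{(3)}\otimes\cdots\otimes T^{(n)}\bigr)^t$.

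Next I would observe that this matrix has rank one. For a rank-one matrix $uv^t$ there is a single nonzero singular value, equal to $\parallel u\parallel\cdot\parallel v\parallel$, so every Ky Fan $k$-norm ($k\ge 1$) coincides with the Frobenius norm and equals $\parallel u\parallel\cdot\parallel v\parallel$. Applying this here, and using multiplicativity of the Frobenius norm under tensor product, $\parallel T_{1|\cdots|n}\parallel_k=\parallel T^{(1)}\parallel\cdot\parallel T^{(2)}\parallel\cdots\parallel T^{(n)}\parallel$ for all $k=1,\cdots,d^2-1$.

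Finally I would invoke the single-body bound $\parallel T^{(j)}\parallel\le\sqrt{2(d-1)/d}$ from [17], which holds with equality precisely when the reduced state $\rho_j$ is pure. Since $\rho$ itself is pure and fully separable, each $\rho_j$ must be pure, so each factor attains equality and $\parallel T^{(j)}\parallel=\sqrt{2(d-1)/d}$. Multiplying the $n$ equal factors gives $\parallel T_{1|\cdots|n}\parallel_k=\bigl(2(d-1)/d\bigr)^{n/2}=\sqrt{2^n(d-1)^n/d^n}$, as claimed.

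The only genuinely nontrivial point is the rank-one observation that collapses all Ky Fan norms to the Frobenius norm — this is what lets the conclusion be an equality independent of $k$, and it is the step I would state carefully. Everything else is the bookkeeping of writing the $n$-fold tensor product matricization as an outer product and invoking the already-established single-body norm identity; the purity of all marginals (forced by purity of the fully separable $\rho$) is what upgrades the single-body inequality to an equality. I do not anticipate any obstacle beyond keeping the index conventions for the matricization $T_{1|\cdots|n}$ consistent with the $n=4$ definition given earlier.
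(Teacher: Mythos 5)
Your proposal is correct and follows essentially the same route as the paper: factorize the top correlation tensor as $t^{(1,\cdots,n)}_{i_1,\cdots,i_n}=t^{(1)}_{i_1}\cdots t^{(n)}_{i_n}$, write $T_{1|\cdots|n}$ as the rank-one outer product $T^{(1)}(T^{(2)}\otimes\cdots\otimes T^{(n)})^t$ so that all Ky Fan $k$-norms collapse to the Frobenius norm, and then use multiplicativity of the Frobenius norm under tensor products together with the saturated single-body bound $\parallel T^{(j)}\parallel=\sqrt{2(d-1)/d}$ for pure marginals. Your explicit justification of the rank-one collapse and of why the single-body inequality is an equality is in fact slightly more careful than the paper's write-up, which leaves both points implicit.
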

\begin{proof}
According to the Proposition 1 of Ref. [21], i.e., if $\rho$ is fully separable then $t_{i_1,\cdots,i_n}^{(1,\cdots,n)}=t_{i_1}^{(1)}\cdots  t_{i_n}^{(n)}$, using the bound  $\parallel T^{(j_1)}\parallel\leq\sqrt{\frac{2(d-1)}{d}}$, $j_1=1,\cdots,n$, we have
\begin{eqnarray}
\parallel T_{1|\cdots|n}\parallel_k&&=\parallel T^{(1)}(T^{(2)}\otimes\cdots\otimes T^{(n)})^t \parallel_k=\parallel T^{(1)}\parallel\cdot\parallel(T^{(2)}\otimes\cdots\otimes T^{(n)})^t\parallel_k\nonumber\\&&=\parallel T^{(1)}\parallel\cdot\parallel T^{(2)}\otimes\cdots\otimes T^{(n)}\parallel_k=\parallel T^{(1)}\parallel\cdot\parallel T^{(2)}\otimes\cdots\otimes T^{(n)}\parallel\nonumber\\&&=\parallel T^{(1)}\parallel\cdot\parallel T^{(2)}\parallel\cdots \parallel T^{(n)}\parallel=\sqrt{\frac{2^n(d-1)^n}{d^n}}.
\end{eqnarray}
Hence, if $\rho$ is fully separable, then
$\parallel T_{1|\cdots|n}\parallel_k=\sqrt{\frac{2^n(d-1)^n}{d^n}}.$
\end{proof}

Let $A_1$ be subsets of the set $\{H_1, H_2,\cdots, H_n\}$ and $A_2$ the complement of $A_1$, $n_{A_1}$ and $n_{A_2}$
be the number of spaces contained in $A_1$ and $A_2$, respectively. For the bipartition $A_1|A_2=j_1\cdots j_{n_{A_1}}|j_{n_{A_1+1}}\cdots j_n$, $j_1\neq j_2\neq \cdots\neq j_n\in \{1,2,\cdots,n\}$ $($this means that any two subsystems are not repeatedly selected$)$,
let $T_{A_1|A_2}$ be a matrix with entries $t_{a,b}=t_{i_1,\cdots,i_n}^{(1,\cdots,n)}$, where $a=(d^2-1)^{n_{A_1}-1}(i_{j_1}-1)+\cdots+i_{j_{n_{A_1}}}$,
$b=(d^2-1)^{n_{A_2}-1}(i_{j_{n_{A_1}}+1}-1)+\cdots+i_{j_n}$, $i_{j_1}, i_{j_2}, \ldots, i_{j_n}=1, 2,\ldots, d^2-1$.

\begin{thm}
Let $\rho\in H^d_1\otimes\cdots \otimes H^d_n $ be a pure state. If $\rho$ is separable under bipartition $A_1|A_2$, then for any $k=1,\cdots,d^2-1$,
\begin{eqnarray}
\parallel T_{A_1|A_2}\parallel_k\leq \sqrt{\frac{2^n(d^{n_{A_1}}-1)(d^{n_{A_2}}-1)}{d^n}}.
\end{eqnarray}
\end{thm}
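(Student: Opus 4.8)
The plan is to follow the pattern established in Lemma~1 and Theorem~2: turn the bipartite‑separable hypothesis into a rank‑one factorization of the matricized correlation tensor, evaluate the Ky Fan norm of a rank‑one matrix, and bound the two factors with the pure‑state estimate (\ref{l3}).

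First I would record what separability across the cut $A_1|A_2$ means for a pure state: $\rho=\rho_{A_1}\otimes\rho_{A_2}$, where $\rho_{A_1}$ and $\rho_{A_2}$ are the reduced states on the two groups of spaces; since $\rho$ is pure, so are $\rho_{A_1}$ and $\rho_{A_2}$, as states on $n_{A_1}$ and $n_{A_2}$ qudits respectively. A direct trace computation — the same reasoning as in Proposition~1 of Ref.~[21], but applied blockwise — shows that every entry of the $n$‑body correlation tensor of $\rho$ factors as the product of an entry of the $n_{A_1}$‑body correlation tensor $T^{(A_1)}$ of $\rho_{A_1}$ and an entry of the $n_{A_2}$‑body correlation tensor $T^{(A_2)}$ of $\rho_{A_2}$. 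Under the index relabelling $a=(d^2-1)^{n_{A_1}-1}(i_{j_1}-1)+\cdots+i_{j_{n_{A_1}}}$, $b=(d^2-1)^{n_{A_2}-1}(i_{j_{n_{A_1}+1}}-1)+\cdots+i_{j_n}$ used in the definition of $T_{A_1|A_2}$, this product structure is exactly the outer‑product identity
\[
T_{A_1|A_2}=T^{(A_1)}\,\big(T^{(A_2)}\big)^{t},
\]
so $T_{A_1|A_2}$ has rank at most one.

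Second, I would use the rank‑one observation already exploited in Lemma~1 and the first parts of Lemmas~2--4: a rank‑one matrix has a single nonzero singular value, equal to the product of the Euclidean norms of its two defining vectors. Hence for every $k\in\{1,\cdots,d^2-1\}$ the Ky Fan $k$‑norm collapses to that one value,
\[
\parallel T_{A_1|A_2}\parallel_k=\parallel T^{(A_1)}\parallel\cdot\parallel T^{(A_2)}\parallel,
\]
and then I would bound each factor by (\ref{l3}) applied to the pure states $\rho_{A_1}$ and $\rho_{A_2}$, i.e. with $n$ replaced by $n_{A_1}$ and by $n_{A_2}$, giving $\parallel T^{(A_1)}\parallel\leq\sqrt{2^{n_{A_1}}(d^{n_{A_1}}-1)/d^{n_{A_1}}}$ and $\parallel T^{(A_2)}\parallel\leq\sqrt{2^{n_{A_2}}(d^{n_{A_2}}-1)/d^{n_{A_2}}}$. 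Multiplying and using $n_{A_1}+n_{A_2}=n$ yields the claimed inequality
\[
\parallel T_{A_1|A_2}\parallel_k\leq\sqrt{\frac{2^{n}(d^{n_{A_1}}-1)(d^{n_{A_2}}-1)}{d^{n}}}.
\]

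The only part that is more than bookkeeping is verifying the factorization step carefully: one must check that the chosen linearization of the multi‑indices really converts the entrywise identity $t^{(1,\cdots,n)}=t^{(A_1)}t^{(A_2)}$ into a genuine matrix outer product, and that the two reduced blocks are pure so that (\ref{l3}) is legitimately available with the smaller party numbers; after that the rank‑one/Ky Fan collapse and the multiplication of the two bounds are immediate. I would also remark that, unlike the fully separable situation (Lemma~1 and Theorem~2), no separability is assumed inside $A_1$ or inside $A_2$; since $\parallel T^{(A_1)}\parallel$ and $\parallel T^{(A_2)}\parallel$ may carry lower‑body correlation contributions that only decrease them relative to the extremal pure‑state value, the statement is an inequality and cannot in general be upgraded to an equality.
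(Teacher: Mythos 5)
Your proposal is correct and follows essentially the same route as the paper: factor $T_{A_1|A_2}=T^{(A_1)}(T^{(A_2)})^{t}$ from the product structure $\rho=\rho_{A_1}\otimes\rho_{A_2}$, use the rank-one collapse of the Ky Fan $k$-norm to the product $\parallel T^{(A_1)}\parallel\cdot\parallel T^{(A_2)}\parallel$, and bound each factor by the pure-state inequality (\ref{l3}) with $n_{A_1}$ and $n_{A_2}$ parties. Your added care in checking the index linearization and the purity of the reduced states, and your remark on why equality need not hold, are welcome refinements but do not change the argument.
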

\begin{proof}
If $\rho$ is separable under bipartition $A_1|A_2$, then $\rho_{A_1}\otimes \rho_{A_2}$. Using the
inequality (22), we get
\begin{eqnarray}
\parallel T_{A_1|A_2}\parallel_k&&=\parallel T^{(A_1)}(T^{(A_2)})^t \parallel_k=\parallel T^{(A_1)}\parallel\cdot\parallel (T^{(A_2)})^t\parallel_k\nonumber\\&&=\parallel T^{(A_1)}\parallel\cdot\parallel (T^{(A_2)})^t\parallel=\parallel T^{(A_1)}\parallel\cdot\parallel (T^{(A_2)})\parallel\nonumber\\&&\leq\sqrt{\frac{2^n(d^{n_{A_1}}-1)(d^{n_{A_2}}-1)}{d^n}}.
\end{eqnarray}
\end{proof}

\begin{thm}
Let $\rho\in H^d_1\otimes\cdots \otimes H^d_n $ be a pure state such that $\rho$ is separable
under at least one bipartition. For any $k=1,\cdots,d^2-1$ and $j_1\neq j_2\neq \cdots\neq j_n\in \{1,2,\cdots,n\}$,  we have

$(i)$  if $\rho$ is entangled under a certain bipartition $j_1|j_2\cdots j_n$, then

$\parallel T_{j_1|j_2\cdots j_n}\parallel_k\leq \sqrt{\frac{2^nk(d^{[\frac{n}{2}]}-1)(d^{n-[\frac{n}{2}]}-1)}{d^n}}$ $([]$ denotes integer function$)$, when $n$ is odd;

$\parallel T_{j_1|j_2\cdots j_n}\parallel_k\leq\sqrt{\frac{2^nk(d^{\frac{n}{2}}-1)^2}{d^n}}$, when $n$ is even;

$(ii)$ if $\rho$ is entangled under a certain bipartition  $j_1\cdots j_{n-1} |j_n$, then

$\parallel T_{j_1\cdots j_{n-1} |j_n}\parallel_k\leq \sqrt{\frac{2^nk(d^{[\frac{n}{2}]}-1)(d^{n-[\frac{n}{2}]}-1)}{d^n}}$, when $n$ is odd;

$\parallel T_{j_1\cdots j_{n-1} |j_n}\parallel_k\leq\sqrt{\frac{2^nk(d^{\frac{n}{2}}-1)^2}{d^n}}$, when $n$ is even.
\end{thm}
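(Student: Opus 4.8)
The plan is to carry the argument of Lemma~2(ii) over to arbitrary $n$; for $n=4$ one could enumerate the few ``other'' separating bipartitions by hand, whereas here the argument must be run uniformly. Fix the bipartition $j_1|j_2\cdots j_n$ under which $\rho$ is entangled. By hypothesis $\rho$ is separable under \emph{some} bipartition; call it $C|C^{c}$ and put $m:=|C|$, so $|C^{c}|=n-m$. Since $\rho$ is entangled across $j_1|j_2\cdots j_n$, neither $C$ nor $C^{c}$ can equal $\{j_1\}$, whence $1\le m\le n-1$; and $j_1$ lies in exactly one of $C,C^{c}$, say $j_1\in C$ (the case $j_1\in C^{c}$ is symmetric).

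For a pure state, separability across $C|C^{c}$ gives $\rho=\rho_{C}\otimes\rho_{C^{c}}$, and then, exactly as in the proof of Theorem~3 (and Proposition~1 of [21]), the full correlation tensor factorizes entrywise, $t^{(1,\cdots,n)}_{i_1\cdots i_n}=t^{(C)}t^{(C^{c})}$, the two factors being the correlation tensors of $\rho_{C}$ and $\rho_{C^{c}}$. Since the only row index of $T_{j_1|j_2\cdots j_n}$ is $i_{j_1}\in C$, this presents $T_{j_1|j_2\cdots j_n}$ — up to a permutation of its columns, which changes no Ky Fan norm — as a Kronecker product $A\otimes v^{t}$, where $A$ is the matricization of $T^{(C)}$ with $i_{j_1}$ as its row index (so $\parallel A\parallel=\parallel T^{(C)}\parallel$) and $v$ is the vectorization of $T^{(C^{c})}$ (so $\parallel v\parallel=\parallel T^{(C^{c})}\parallel$). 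As $v^{t}$ has the single nonzero singular value $\parallel v\parallel$, the singular values of $A\otimes v^{t}$ are $\{\sigma_i(A)\parallel v\parallel\}$, hence $\parallel A\otimes v^{t}\parallel_k=\parallel v\parallel\,\parallel A\parallel_k\le\sqrt{k}\,\parallel v\parallel\,\parallel A\parallel$ by $\parallel M\parallel_k\le\sqrt{k}\,\parallel M\parallel$. Combining this with the $m$-body and $(n-m)$-body instances of the bound \eqref{l3}, namely $\parallel T^{(C)}\parallel\le\sqrt{2^{m}(d^{m}-1)/d^{m}}$ and $\parallel T^{(C^{c})}\parallel\le\sqrt{2^{n-m}(d^{n-m}-1)/d^{n-m}}$, gives
\[
\parallel T_{j_1|j_2\cdots j_n}\parallel_k\ \le\ \sqrt{\tfrac{2^{n}k\,(d^{m}-1)(d^{n-m}-1)}{d^{n}}}.
\]

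Because the separating bipartition is not ours to choose, we finish by passing to the worst case over $m\in\{1,\cdots,n-1\}$. Writing $(d^{m}-1)(d^{n-m}-1)=d^{n}+1-(d^{m}+d^{n-m})$, maximizing the product is the same as minimizing $d^{m}+d^{n-m}$; since $x\mapsto d^{x}$ is convex and this quantity is symmetric about $m=n/2$, its minimum over integer $m$ is attained at $m=n/2$ when $n$ is even and at $m=[\tfrac{n}{2}]$ (equivalently $m=n-[\tfrac{n}{2}]$) when $n$ is odd. Substituting these values yields precisely the two estimates in $(i)$. Part $(ii)$ is the transpose of this: for $T_{j_1\cdots j_{n-1}|j_n}$ the single index is now the \emph{column} index $i_{j_n}$, the same factorization writes the matrix as $B\otimes v^{t}$ (or $v\otimes B$) with $B$ the matricization carrying $i_{j_n}$ and $v$ the vectorization of the complementary sub-tensor, and the identical chain of inequalities produces the same two bounds.

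I expect the only genuinely new work beyond the $n=4$ case to be (a) the bookkeeping of the Kronecker factorization for an arbitrary split, which is harmless once one notes the Frobenius norm is invariant under reshaping and the Ky Fan norms are invariant under row/column permutations, and (b) the convexity step fixing the extremal $m$ — this is exactly where the integer-part function and the even/odd dichotomy enter, so it is the part I would take care to write out explicitly.
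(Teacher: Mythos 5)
Your proposal is correct and follows essentially the same route as the paper: factor the correlation tensor across the known separable cut $C|C^{c}$, realize $T_{j_1|j_2\cdots j_n}$ (up to an index permutation) as a Kronecker product whose Ky Fan $k$-norm is bounded by $\sqrt{k}$ times the product of the Frobenius norms of the two factors, and then maximize $(d^{m}-1)(d^{n-m}-1)$ over $m$ via convexity to land on $m=[\tfrac{n}{2}]$. The only difference is organizational: the paper splits into sub-cases according to the position of $j_1$ (resp.\ $j_n$) and the size $p$ of the separating cut, while your uniform $A\otimes v^{t}$ bookkeeping subsumes all of these at once and still yields the stated bound.
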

\begin{proof}
$(i)$ If $\rho$ is entangled under bipartition $j_1|j_2\cdots j_n$, then there is at least one bipartition ${j'_1}\cdots {j'_p}|{j'_{p+1}}\cdots {j'_n}$ $(p=1,2\cdots,n-1)$ such that $\rho$ is separable. Let ${j'_1}\cdots {j'_p}|{j'_{p+1}}\cdots {j'_n}=A_1|A_2$, then $n_{A_1}=p$.

~$\textcircled{1}$ $j_1=1$.  If $p=1$ and $j'_1\neq1$, we have
\begin{eqnarray}\label{u1}
\parallel T_{j_1|j_2\cdots j_n}\parallel_k&=&\parallel T^{({j'_1})}(T^{({j'_2},\cdots,{j'_n})})^t\parallel_k=\parallel T^{({j'_1})}\parallel\cdot\parallel(T^{({j'_2},\cdots,{j'_n})})^t\parallel_k\nonumber \\ &=&\parallel T^{({j'_1})}\parallel\cdot\parallel T^{({j'_2},\cdots,{j'_n})}\parallel\leq\sqrt{\frac{2^n(d-1)(d^{n-1}-1)}{d^n}}.
\end{eqnarray}

If $p=2,3,\cdots,n-1$, we get
\begin{eqnarray}\label{u2}
\parallel T_{j_1|j_2\cdots j_n}\parallel_k&=&\parallel T^{(\underline{j'_1},\cdots,j'_p)}\otimes(T^{(j'_{p+1},\cdots,j'_n)})^t\parallel_k=\parallel T^{(\underline{j'_1},\cdots,j'_p)}\parallel_k\cdot\parallel(T^{(j'_{p+1},\cdots,j'_n)})^t\parallel_k\nonumber \\ &\leq&\sqrt{k}\parallel T^{(\underline{j'_1},\cdots,j'_p)}\parallel\cdot\parallel T^{(j'_{p+1},\cdots,j'_n)}\parallel\leq\sqrt{\frac{2^nk(d^{p}-1)(d^{n-p}-1)}{d^n}}.
\end{eqnarray}

~$\textcircled{2}$ $j_1=2,\cdots,n-1$. For any $p$ we have
\begin{eqnarray}\label{u3}
\parallel T_{j_1|j_2\cdots j_n}\parallel_k&=&\parallel T^{(j'_1,\cdots,\underline{j'_{j_1}},\cdots,j'_p)}\otimes(T^{(j'_{p+1},\cdots,j'_n)})^t\parallel_k=\parallel T^{(j'_1,\cdots,\underline{j'_{j_1}},\cdots,j'_p)}\parallel_k\cdot\parallel(T^{(j'_{p+1},\cdots,j'_n)})^t\parallel_k\nonumber \\ &\leq&\sqrt{k}\parallel T^{(j'_1,\cdots,\underline{j'_{j_1}},\cdots,j'_p)}\parallel\cdot\parallel T^{(j'_{p+1},\cdots,j'_n)}\parallel\leq\sqrt{\frac{2^nk(d^{p}-1)(d^{n-p}-1)}{d^n}}.
\end{eqnarray}

~$\textcircled{3}$  $j_1=n$. If $p=1,\cdots,n-2$, we have
\begin{eqnarray}\label{u5}
\parallel T_{j_1|j_2\cdots j_n}\parallel_k&=&\parallel (T^{(j'_1,\cdots,j'_p)})^t\otimes T^{(j'_{p+1},\cdots,\underline{j'_n})}\parallel_k=\parallel (T^{(j'_1,\cdots,j'_p)})^t\parallel_k\cdot\parallel T^{(j'_{p+1},\cdots,\underline{j'_n})}\parallel_k\nonumber \\ &\leq&\sqrt{k}\parallel T^{(j'_1,\cdots,j'_p)}\parallel\cdot\parallel T^{(j'_{p+1},\cdots,\underline{j'_n})}\parallel\leq\sqrt{\frac{2^nk(d^{p}-1)(d^{n-p}-1)}{d^n}}.
\end{eqnarray}

If $p=n-1$, we get
\begin{eqnarray}\label{u6}
\parallel T_{j_1|j_2\cdots j_n}\parallel_k&=&\parallel (T^{(j'_1,\cdots,j'_{n-1})})^t\otimes T^{(j'_n)}\parallel_k=\parallel (T^{(j'_1,\cdots,j'_{n-1})})^t\parallel\cdot\parallel T^{(j'_n)}\parallel_k\nonumber \\
&=&\parallel T^{(j'_1,\cdots,j'_{n-1})}\parallel\cdot\parallel T^{(j'_n)}\parallel\leq\sqrt{\frac{2^n(d-1)(d^{n-1}-1)}{d^n}}.
\end{eqnarray}

Now consider max$\{\sqrt{\frac{2^nk(d^{p}-1)(d^{n-p}-1)}{d^n}},\sqrt{\frac{2^n(d-1)(d^{n-1}-1)}{d^n}}\}$ $p=1,\cdots,n-1$.
Let  $y=(d^h-1)(d^{n-h}-1)$ $(h>0)$ be a continuous function. Then the maximal value is $y_{max}=(d^{\frac{n}{2}}-1)^2$.
If $n$ is odd, $\parallel T_{A_1|A_2}\parallel_k\leq
\sqrt{\frac{2^nk(d^{[\frac{n}{2}]}-1)(d^{n-[\frac{n}{2}]}-1)}{d^n}}$.
If $n$ is even, $\parallel T_{A_1|A_2}\parallel_k\leq\sqrt{\frac{2^nk(d^{\frac{n}{2}}-1)^2}{d^n}}$.

$(ii)$ If $\rho$ is entangled under bipartition $j_1\cdots j_{n-1}|j_n$, then there is at least one bipartition ${j'_1}\cdots {j'_p}|{j'_{p+1}}\cdots {j'_n}$ $p=1,2\cdots,n-1$, such that $\rho$ is separable. Similarly, let ${j'_1}\cdots {j'_p}|{j'_{p+1}}\cdots {j'_n}=A_1|A_2$, then $n_{A_1}=p$. The proof can be done in three cases.

~$\textcircled{1}$ $j_n=1$. If $p=1$, we have
\begin{eqnarray}
\parallel T_{j_1\cdots j_{n-1}|j_n}\parallel_k&=&\parallel (T^{(j'_1)})^t\otimes T^{(j'_2,\cdots,j'_n)}\parallel_k=\parallel (T^{(j'_1)})^t\parallel_k\cdot\parallel T^{(j'_2,\cdots,j'_n)}\parallel_k\nonumber \\ &=&\parallel T^{(j'_1)}\parallel\cdot\parallel T^{(j'_2,\cdots,j'_n)}\parallel\leq\sqrt{\frac{2^n(d-1)(d^{n-1}-1)}{d^n}}.
\end{eqnarray}

If $p=2,\cdots,n-1$, we get
\begin{eqnarray}
\parallel T_{j_1\cdots j_{n-1}|j_n}\parallel_k&=&\parallel T^{(j'_1,\underline{j'_2,\cdots,j'_p})}\otimes T^{(j'_{p+1},\cdots,j'_n)}\parallel_k=\parallel T^{(j'_1,\underline{j'_2,\cdots,j'_p})}\parallel_k\cdot\parallel T^{(j'_{p+1},\cdots,j'_n)}\parallel_k\nonumber \\ &\leq&\sqrt{k}\parallel T^{(j'_1,\underline{j'_2,\cdots,j'_p})}\parallel\cdot\parallel T^{(j'_{p+1},\cdots,j'_n)}\parallel\leq\sqrt{\frac{2^nk(d^{p}-1)(d^{n-p}-1)}{d^n}}.
\end{eqnarray}

~$\textcircled{2}$ $j_n=2,\cdots,n-1$. For any $p$ we have
\begin{eqnarray}
\parallel T_{j_1\cdots j_{n-1}|j_n}\parallel_k&=&\parallel T^{(\underline{{j'_1},\cdots},{j'_{j_n}},\underline{\cdots,{j'_p}})}\otimes T^{(j'_{p+1},\cdots,{j'_n})}\parallel_k=\parallel T^{(\underline{{j'_1},\cdots},{j'_{j_n}},\underline{\cdots,{j'_p}})}\parallel_k\cdot\parallel T^{(j'_{p+1},\cdots,{j'_n})}\parallel_k\nonumber \\ &\leq&\sqrt{k}\parallel T^{(\underline{{j'_1},\cdots},{j'_{j_n}},\underline{\cdots,{j'_p}})}\parallel\cdot\parallel T^{(j'_{p+1},\cdots,{j'_n})}\parallel\leq\sqrt{\frac{2^nk(d^{p}-1)(d^{n-p}-1)}{d^n}}.
\end{eqnarray}

~$\textcircled{3}$ $j_n=n$. If $p=1,\cdots,n-2$, we get
\begin{eqnarray}
\parallel T_{j_1\cdots j_{n-1}|j_n}\parallel_k&=&\parallel T^{({j'_1},\cdots,{j'_p})}\otimes T^{(\underline{{j'_{p+1}},\cdots,{j'_{n-1}}},{j'_n})}\parallel_k=\parallel T^{({j'_1},\cdots,{j'_p})}\parallel_k\cdot\parallel T^{(\underline{{j'_{p+1}},\cdots,j'_{n-1}},{j'_n})}\parallel_k\nonumber \\ &\leq&\sqrt{k}\parallel T^{({j'_1},\cdots,{j'_p})}\parallel\cdot\parallel T^{(\underline{{j'_{p+1}},\cdots,j'_{n-1}},{j'_n})}\parallel\leq\sqrt{\frac{2^nk(d^{p}-1)(d^{n-p}-1)}{d^n}}.
\end{eqnarray}

If $p=n-1$ and $j'_n\neq n$, we have
\begin{eqnarray}
\parallel T_{j_1\cdots j_{n-1}|j_n}\parallel_k&=&\parallel T^{(j'_1,\cdots,j'_{n-1})} (T^{(j'_n)})^t\parallel_k=\parallel T^{(j'_1,\cdots,j'_{n-1})}\parallel\cdot\parallel(T^{(j'_n)})^t\parallel_k\nonumber \\
&=&\parallel T^{(j'_1,\cdots,j'_{n-1})}\parallel\cdot\parallel T^{(j'_n)}\parallel\leq\sqrt{\frac{2^n(d-1)(d^{n-1}-1)}{d^n}}.
\end{eqnarray}
If $n$ is odd, $\parallel T_{j_1\cdots j_{n-1}|j_n}\parallel_k\leq
\sqrt{\frac{2^nk(d^{[\frac{n}{2}]}-1)(d^{n-[\frac{n}{2}]}-1)}{d^n}}$. If $n$ is even, $\parallel T_{j_1\cdots j_{n-1}|j_n}\parallel_k\leq\sqrt{\frac{2^nk(d^{\frac{n}{2}}-1)^2}{d^n}}$.
\end{proof}

\section{Conclusion}
We have studied genuine multipartite entanglement in four-partite and multipartite qudit quantum systems, and derived the relationship between the norms of the correlation tensors and the specific matrix $T$.
Based on these relations we have presented a criterion to detect GME in four-partite quantum systems. These results  are generalized to multipartite systems.
Our main results concern with special inequalities that bound the various norms of the correlation tensors, upon which our criterion is presented to detect GME in multipartite systems. These results can
help distinguishing genuine multipartite entangled states. Genuine multipartite entanglement plays significant roles in many quantum information processing.
Our approach and results may highlight further researches on the theory of genuine multipartite entanglement.

\vspace*{2mm}

$\textbf{Acknowledgments}$ This work is supported by the National Natural Science Foundation of China under grant Nos. 11101017, 11531004, 11726016 and 11675113,
and Simons Foundation under grant No. 523868, the NSF of Beijing under Grant No. KZ201810028042.

\end{document}